\pgfplotsset{compat=1.15}
\numberwithin{equation}{section}
\newtheorem{theorem}{Theorem}[section]
\newtheorem{lemma}[theorem]{Lemma}
\newtheorem{proposition}[theorem]{Proposition}
\theoremstyle{remark}
\newtheorem{remark}[theorem]{Remark}
\newtheorem{example}[theorem]{Example}
\newtheoremstyle{rmdefinition}{}{}{\upshape}{}{\bfseries}{.}{ }{}
\theoremstyle{rmdefinition}
\newcommand{\be}[1]{\begin{equation}\label{#1}}
\newcommand{\ee}{\end{equation}}
\newcommand{\beqa}{\begin{eqnarray}}
\newcommand{\eeqa}{\end{eqnarray}}
\newcounter{tmpc}
\newlength{\tmplenght}
\newlength{\tmplenghta}
\newlength{\tmplenghtb}
\newlength{\tmplenghtc}
\begin{document}

\title[Projective limits]{Projective limits in Euclidean quantum field theory, I: Free scalar fields}

\author{Svetoslav Zahariev}
\address{MEC Department, LaGuardia Community College of The City University of New York, 31-10 Thomson Ave., Long Island City, NY 11101, U.S.A.}
\email{szahariev@lagcc.cuny.edu}

\maketitle
\begin{abstract}
We present two constructions of projective systems of measures associated to discretizations of free scalar Euclidean quantum fields. The first one is obtained using only purely combinatorial data and applies to free massless scalar fields on polyhedral cell complexes. The second construction utilizes the suitably discretized covariance of the free massive scalar field on a compact Riemannian manifold.
\end{abstract}

\section{Introduction}
Recently, there has been considerable interest in obtaining inductive systems of operator algebras associated to lattice field theories in the context of algebraic quantum field theory (QFT), see e.g. \cite{LSW} and \cite{MMST}. In particular, \cite{MMST} establishes the existence of an inductive system of $C^*$-algebras associated to discretizations of the free massive scalar quantum field defined on a sequence of lattices whose spacing tends to zero. The limit of this inductive system is then identified with the corresponding continuum relativistic free quantum field. 

In this note, we tackle the analogous problem in the setting of Euclidean QFT which can be naturally formulated in terms of projective systems of probability measures, restricting our attention to free scalar fields, as in \cite{MMST}. One may hope that the usual discretization of the action corresponding to such a field via lattice Laplacians leads to a projective system of measures with respect to certain natural subdivision bonding maps. Indeed, our first main theorem asserts, for suitably chosen inner products on the spaces of 1-cochains, the existence of such a projective system in the case of a discretized free massless scalar field on a polyhedral cell complex of arbitrary dimension. This result may be regarded as an analogue of the known projective system of measures associated to a 2-dimensional lattice gauge theory, cf. \cite{KK} and \cite{VM}.

Unfortunately, identifying the limit of the latter projective system of measures with the continuum field appears rather difficult. Thus, we consider as an alternative a suitable discretization scheme for the covariance operator of the free Euclidean scalar field. Our second main result is based on this approach and may be stated as follows. Let $M$ be a closed Riemannian manifold and $\{K_{i}\}$ be a sequence of triangulations of $M$ (satisfying suitable technical conditions) whose mesh tends to 0. Then there exists a projective system of measures on the spaces of 0-cochains associated to $\{K_{i}\}$ whose limit may be naturally identified with the free massive scalar field on $M$. To construct this projective system, we utilize the framework of Whitney and de Rham maps developed in \cite{Do} which has already been recast in a Euclidean QFT setting in \cite{AZ}. We plan to apply similar methods in order to obtain projective systems of measures corresponding to higher dimensional lattice gauge theories in a subsequent work.

This article is organized as follows. In Section \ref{prelsec} we review several important facts pertaining to the theory of inductive and projective limits of Hilbert spaces, projective systems of measures and inductive systems of characteristic functionals. Section \ref{onedimsec} is dedicated to the case of a massless free scalar field on a polyhedron, while in Section \ref{rimmandsec} we establish our second main result as presented above, as well as an infinite volume limit of our construction.

\section{Preliminaries}\label{prelsec}
\subsection{Inductive and projective limits of Hilbert spaces}\label{indprojsubse}
In this section, we briefly discuss projective and inductive limits in $\mathbf{Hilb}_1$, the category whose objects are real Hilbert spaces and whose morphisms are linear contractions, i.e. maps of norm not exceeding 1 (see \cite[Section 2]{Gr} for details). We note that these limits should not be confused with the better known projective and inductive limits in the category of locally convex spaces.

Let $\{\mathcal{H}_i\}_{i=1}^{\infty}$ be a sequence of real finite dimensional inner product vector spaces and suppose we are given linear isometries $I_{ij}: \mathcal{H}_i \rightarrow \mathcal{H}_j$ for all $i\leq j$ satisfying $I_{ii}=\text{Id}$ and $I_{jk} I_{ij}=I_{ij}$ for all $i \leq j \leq k $.

Regarding $(\mathcal{H}_i,P_{ij})$, where $P_{ij}=I^*_{ij}$, as a projective system in $\mathbf{Hilb}_1$, its projective limit may be defined as follows. Denote by $\ell^{\infty}(\{H_{i}\})$ the Banach space 
of all sequences $\{h_{i}\}_{i \in \mathbb{N}}$, $h_i \in h_i$ such that
\begin{equation}\label{projnormsupde}
	\|\{h_{i}\}\|_{\infty}:=\sup_{i}\| h_i\|_{i}< \infty,
\end{equation}
where $\|\cdot \|_i$ stands for the norm induced by the inner product $\langle \cdot,\cdot \rangle_i$ on $H_{i}$. One sets
\begin{equation}\label{projlimband}
	\mathcal{H}_{\infty}:=
	\{\{h_i\} \in \ell^{\infty}(\{H_{i}\}): P_{ij}h_j=h_i, \forall i<j\}, \quad P_i(\{h_i\}):=h_i,
\end{equation}
and checks (cf. \cite[Theorem 2.3]{Gr}) that the pair $(\mathcal{H}_{\infty}, P_i)$,
where  $P_{i}:\mathcal{H}_{\infty} \rightarrow  H_{i}$,  is a projective limit of 
$(H_{i},P_{ij})$ in $\mathbf{Hilb}_1$. In particular, one has
$P_{ij}P_j=P_i$ for all $i<j$. Indeed, one sees  that the supremum in (\ref{projnormsupde}) becomes limit when applied to sequences in $\mathcal{H}_{\infty}$. It follows that the norm $\| \cdot \|_{\infty }$ on $\lim_{\leftarrow}\{H_i\}$ defined in  (\ref{projnormsupde}) satisfies the parallelogram identity since this is true for the norms $\|\cdot \|_i$, i.e. the Banach space $\mathcal{H}_{\infty}$ is in fact a Hilbert space.

It is also shown in \cite[Section 2]{Gr} that $\mathcal{H}_{\infty}$ may be identified with the inductive limit of $(\mathcal{H}_i,I_{ij})$ in $\mathbf{Hilb}_1$ as well, so that one obtains contractions $I_i:=P^{*}_{i}: \mathcal{H}_i \rightarrow \mathcal{H}_{\infty}$ satisfying $I_{ij}I_j=I_i$ for all $i<j$. Our assumption that $I_{ij}$ are isometries implies that so are the maps $I_i$. One has $P_i I_i=Id$ and $\cup_{i}\mathtt{Im}I_i$ is dense in $\mathcal{H}_{\infty}$ (see \cite[Section 4.2]{C}).

\begin{lemma}\label{leonequicof}
Let $f_i: \mathcal{H}_i \rightarrow \mathbb{C}$ be a sequence of continuous functions satisfying $f_i=f_jI_{ij}$ for all $j>i$. Assume  that the functions $f_iP_i$ defined on $\mathcal{H}_{\infty}$ are equicontinuous and pointwise uniformly bounded. Then there exists a unique continuous $f_{\infty}: \mathcal{H}_{\infty} \rightarrow \mathbb{C}$ satisfying
$f_i=f_{\infty}I_{i}$ for all $i$.
\end{lemma}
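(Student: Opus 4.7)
The plan is to first define $f_\infty$ on the dense subset $D := \bigcup_{j} \mathrm{Im}\, I_j$ in an obvious way, and then to extend it continuously using the equicontinuity hypothesis in a standard three-epsilon argument. Concretely, for $x = I_j h_j \in D$, the compatibility $f_i = f_j I_{ij}$ forces any candidate $f_\infty$ to satisfy $f_\infty(I_j h_j) = f_j(h_j)$, and I would first check this is well-defined on $D$: if $I_j h_j = I_k h_k$ with, say, $j \leq k$, then applying $P_k$ gives $I_{jk} h_j = h_k$ (using $I_j = I_k I_{jk}$ and $P_k I_k = \mathrm{Id}$), so $f_k(h_k) = f_k(I_{jk} h_j) = f_j(h_j)$. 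Call this map $\tilde f : D \to \mathbb{C}$.

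The next step is to connect $\tilde f$ with the given functions $g_i := f_i P_i$ on $\mathcal{H}_\infty$. A direct computation shows that for $x = I_j h_j \in D$ and any $i \geq j$,
\begin{equation*}
	g_i(x) = f_i(P_i I_j h_j) = f_i(I_{ji} h_j) = f_j(h_j) = \tilde f(x),
\end{equation*}
so the sequence $\{g_i\}$ is \emph{eventually constant}, hence pointwise convergent to $\tilde f$, on the dense set $D$. This observation makes the pointwise uniform boundedness hypothesis essentially automatic on $D$ and sets up the extension step.

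To extend to all of $\mathcal{H}_\infty$, fix $x \in \mathcal{H}_\infty$ and $\varepsilon > 0$. Equicontinuity of $\{g_i\}$ at $x$ yields a $\delta > 0$ with $|g_i(y) - g_i(x)| < \varepsilon/3$ for all $i$ whenever $\|y - x\|_\infty < \delta$. Choosing $x_0 \in D$ with $\|x_0 - x\|_\infty < \delta$, the fact that $\{g_i(x_0)\}$ converges (by the previous step) makes it Cauchy, and a standard three-epsilon estimate then shows $\{g_i(x)\}$ is Cauchy in $\mathbb{C}$. I would therefore define $f_\infty(x) := \lim_{i\to\infty} g_i(x)$; on $D$ this recovers $\tilde f$, so in particular $f_\infty \circ I_i = f_i$ for every $i$. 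Continuity of $f_\infty$ follows by passing to the limit in the equicontinuity inequality $|g_i(y) - g_i(x)| < \varepsilon/2$, and uniqueness is immediate from the density of $D$ together with continuity.

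The only mildly technical point is the three-epsilon extension argument in the third paragraph; everything else is bookkeeping with the compatibility relations $f_i = f_j I_{ij}$, $P_i I_i = \mathrm{Id}$, and $I_j = I_i I_{ji}$ for $j \leq i$. The pointwise boundedness assumption plays no explicit role beyond ensuring that the Cauchy sequences $\{g_i(x)\}$ live in a bounded subset of $\mathbb{C}$, which is in any case automatic here because $\{g_i\}$ is eventually constant on the dense set $D$.
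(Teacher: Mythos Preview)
Your argument is correct. The paper takes a different route: it invokes the (generalized) Arzel\`a--Ascoli theorem to extract a subnet of $\{g_i\}=\{f_iP_i\}$ converging (uniformly on compacta) to some continuous $f_\infty$, and then checks that this limit agrees with the canonical function on $D=\bigcup_i\mathrm{Im}\,I_i$ via $P_iI_i=\mathrm{Id}$. Your three-epsilon argument is more elementary and actually proves more: you show that the \emph{full} sequence $\{g_i(x)\}$ is Cauchy at every $x\in\mathcal{H}_\infty$, not merely that some subnet converges, and you never need to appeal to compactness results on an infinite-dimensional domain. You are also right that the pointwise uniform boundedness hypothesis is superfluous in your approach---equicontinuity together with the eventual constancy of $\{g_i\}$ on the dense set $D$ already forces Cauchyness everywhere---whereas the paper genuinely needs that hypothesis to feed into Arzel\`a--Ascoli. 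The trade-off is that the paper's argument is a one-line application of a standard theorem, while yours is self-contained but requires writing out the extension step explicitly.
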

\begin{proof}
The consistency condition $f_i=f_jI_{ij}$ implies the existence of a unique function $f$ defined on $\cup_{i}\mathtt{Im}I_i$ whose restriction to $\mathtt{Im}I_i$ is $f_i$. Moreover, $f$ is continuous on $\cup_{i}\mathtt{Im}I_i$ equipped with the inductive limit topology (in the category of locally convex spaces) since its restrictions to $\mathtt{Im}I_i$ are continuous.
By our assumptions and the (generalized) Arzel\`a-Ascoli theorem the sequence $f_iP_i$ has a subnet converging to a continuous function $f_{\infty}$. Since $P_iI_i=Id$, $f_{\infty}$ coincides with $f$ on $\cup_{i}\mathtt{Im}I_i$.
\end{proof}

\subsection{Inductive limits of characteristic functionals}\label{Indlimfuncsec}
Consider a countable projective system $(X_i,P_{ij})$ in the category $\mathbf{Top}$ of topological spaces and continuous mappings, and a sequence of finite Radon measures $\mu_i$ on $X_i$. Recall that $(X_i,P_{ij},\mu_i)$ is called a {\em projective system of measures} if one has $P_{ij}(\mu_j)=\mu_i$ for all $j>i$, where $P_{ij}(\mu_j)$ stands for the image/pushforward of $\mu_j$ under $P_{ij}$. It is well-known (see e.g. \cite[Part I, Chapter I, \S 10]{Sz}) that in this situation there exists a unique finite Radon measure $\mu_{\infty}$ on $(X_{\infty},P_i)$, the projective limit of $(X_i,P_{ij})$ in $\mathbf{Top}$ such that $P_i(\mu_{\infty})=\mu_i$ for all $i$. The measure $\mu_{\infty}$ is called the projective limit of the sequence $\mu_i$.

In what follows we shall discuss, in the special case when $X_i$ are real finite dimensional inner product vector spaces, the dual notion of an inductive system of characteristic functionals. Let $(\mathcal{H}_i,I_{ij})$ be an inductive system in $\mathbf{Hilb}_1$ with $I_{ij}$ isometries, as in Section \ref{indprojsubse}. Further, let $S_i$ be a sequence of continuous positive definite functionals on $\mathcal{H}_i$ satisfying $S_jI_{ij}=S_i$ for all $i$. We shall call the triple $(\mathcal{H}_i,I_{ij},S_i)$ an {\em inductive system of characteristic functionals}.

To motivate the latter definition, we now briefly review the definition and basic properties of characteristic functionals of  finite measures. Let $\mathcal{H}$ be a real finite dimensional Hilbert space and let $\mu$ be a finite Borel measure on 
$\mathcal{H}$. The characteristic functional/Fourier transform of $\mu$ is given by 
$$S_{\mu}(h')=\int_{\mathcal{H}}e^{i\langle h, h' \rangle_{\mathcal{H}}} d\mu(h), \quad h' \in \mathcal{H}.$$
Recall that by Bochner's theorem ((see e.g. \cite[Theorem 7.13.1]{B})), every positive definite continuous functional on $\mathcal{H}$ is the characteristic functional of a finite Borel measure on  $\mathcal{H}$.

\begin{lemma}\label{funcpropcharfu}
Let $\mathcal{H}_1$ and $\mathcal{H}_2$ be two  real finite dimensional Hilbert spaces and $L: \mathcal{H}_1 \rightarrow \mathcal{H}_2$ be a linear map. Let $\mu_i$ be a finite Borel measure on $\mathcal{H}_i$, $i=1,2$. Then $L(\mu_1)=\mu_2$ if and only if $S_{\mu_1}L^*=S_{\mu_2}$.
\end{lemma}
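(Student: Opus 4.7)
The plan is to reduce both implications to a single direct computation of the characteristic functional of the pushforward measure $L(\mu_{1})$, and then invoke the uniqueness of a finite Borel measure determined by its characteristic functional.

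First I would establish the key identity: for any finite Borel measure $\mu_{1}$ on $\mathcal{H}_{1}$ and any linear $L:\mathcal{H}_{1}\to\mathcal{H}_{2}$, the characteristic functional of the pushforward $L(\mu_{1})$ equals $S_{\mu_{1}}L^{*}$. The argument is just change of variables followed by the defining property of the adjoint: for $h'\in\mathcal{H}_{2}$,
\begin{equation*}
S_{L(\mu_{1})}(h')=\int_{\mathcal{H}_{2}}e^{i\langle h,h'\rangle_{2}}\,d(L\mu_{1})(h)=\int_{\mathcal{H}_{1}}e^{i\langle Lh,h'\rangle_{2}}\,d\mu_{1}(h)=\int_{\mathcal{H}_{1}}e^{i\langle h,L^{*}h'\rangle_{1}}\,d\mu_{1}(h)=S_{\mu_{1}}(L^{*}h').
\end{equation*}
Note that $L$ is automatically Borel measurable, so the pushforward $L(\mu_{1})$ is a well-defined finite Borel measure on $\mathcal{H}_{2}$, and the substitution rule for pushforward measures applies without technical issue.

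From this identity the two implications are immediate. If $L(\mu_{1})=\mu_{2}$, then $S_{\mu_{2}}=S_{L(\mu_{1})}=S_{\mu_{1}}L^{*}$, giving the forward direction. Conversely, if $S_{\mu_{1}}L^{*}=S_{\mu_{2}}$, then the identity shows that $L(\mu_{1})$ and $\mu_{2}$ are two finite Borel measures on $\mathcal{H}_{2}$ with the same characteristic functional; by the uniqueness part of Bochner's theorem (equivalently, Fourier inversion for finite Borel measures on a finite dimensional inner product space) they must coincide.

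There is no real obstacle here; the only point requiring slight care is to cite the uniqueness of the measure–characteristic-functional correspondence rather than treat it as obvious, since Bochner's theorem is stated above only as an existence result. Invoking it via the standard injectivity of the Fourier transform on finite Radon measures closes the argument.
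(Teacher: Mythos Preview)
Your proof is correct and follows essentially the same approach as the paper: both compute the characteristic functional of the pushforward via the change of variables formula and the adjoint relation, and both invoke uniqueness of a finite Borel measure from its characteristic functional for the converse. The only cosmetic difference is that you first establish the identity $S_{L(\mu_1)}=S_{\mu_1}L^*$ unconditionally and then read off both directions, whereas the paper performs the same computation under the assumption $L(\mu_1)=\mu_2$ and then cites the uniqueness result for the reverse implication.
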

\begin{proof}
We assume that $L(\mu_1)=\mu_2$ and applying the change of variables formula for pushforward measures, find
	\begin{multline*}
		S_{\mu_2}(h')=\int_{\mathcal{H}_2}e^{i\langle h', h\rangle_{\mathcal{H}_2}} dL(\mu_1)(h)=\int_{\mathcal{H}_1}e^{i\langle h', Lh\rangle_{\mathcal{H}_1}} d\mu_1(h)\\
		=\int_{\mathcal{H}_1}e^{i\langle L^*h', h\rangle_{\mathcal{H}_1}}d\mu_1(h)=
		S_{\mu_1}(L^*(h')),
	\end{multline*}
 for all  $h' \in \mathcal{H}_2$. Conversely, since a measure is determined by its characteristic functional (see e.g. \cite[Lemma 7.13.5]{B}), it follows that $S_{\mu_1}L^*=S_{\mu_2}$ implies $L(\mu_1)=\mu_2$.
\end{proof}

\begin{proposition}\label{prooncharfu} (a) Assume that the functionals $S_iP_i$ defined on $\mathcal{H}_{\infty}$ are pointwise uniformly bounded and equicontinous. Then there exists a unique continuous positive definite functional $S_{\infty}$ on $\mathcal{H}_{\infty}$ such that $S_{\infty}I_i=S_i$ for all $i$.
	
(b) The triple $(\mathcal{H}_i,I_{ij},S_{\mu_i})$ is an inductive system of characteristic functionals if and only if $(\mathcal{H}_i,I_{ij}^{*},\mu_i)$ is a  projective system of measures.
\end{proposition}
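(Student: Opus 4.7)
For part (a), the plan is to invoke Lemma \ref{leonequicof} directly with $f_i = S_i$. The consistency hypothesis $S_j I_{ij} = S_i$, together with the equicontinuity and uniform pointwise boundedness of $\{S_i P_i\}$ on $\mathcal{H}_\infty$, are exactly the hypotheses of that lemma, so it produces a unique continuous $S_\infty \colon \mathcal{H}_\infty \to \mathbb{C}$ with $S_\infty I_i = S_i$ for every $i$. The remaining task is to upgrade this to positive definiteness.

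The argument for positive definiteness is by approximation from $\bigcup_i \mathtt{Im}\, I_i$, which is dense in $\mathcal{H}_\infty$. Given any finite collection $h_1,\ldots,h_n \in \mathcal{H}_\infty$ and scalars $c_1,\ldots,c_n \in \mathbb{C}$, I would first choose approximants $y_k^{(m)} \in \mathtt{Im}\, I_{j_k}$ with $y_k^{(m)} \to h_k$ as $m \to \infty$. Using the nesting property $\mathtt{Im}\, I_i \subseteq \mathtt{Im}\, I_j$ for $i \leq j$ (which follows from the identity relating $I_i$, $I_j$, $I_{ij}$), for each $m$ one can pick $i_m \geq \max_k j_k$ so that all $y_k^{(m)}$ lie in $\mathtt{Im}\, I_{i_m}$, i.e. $y_k^{(m)} = I_{i_m} x_k^{(m)}$ for some $x_k^{(m)} \in \mathcal{H}_{i_m}$. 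Since $I_{i_m}$ is an isometry,
\begin{equation*}
    \sum_{k,\ell} c_k \bar c_\ell\, S_\infty(y_k^{(m)} - y_\ell^{(m)}) = \sum_{k,\ell} c_k \bar c_\ell\, S_{i_m}(x_k^{(m)} - x_\ell^{(m)}) \geq 0
\end{equation*}
by positive definiteness of $S_{i_m}$. Passing $m \to \infty$ and using continuity of $S_\infty$ yields the required inequality for $h_1,\ldots,h_n$.

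For part (b), the statement is essentially a translation device and follows mechanically from Lemma \ref{funcpropcharfu}. Apply that lemma to $L = P_{ij} = I_{ij}^* \colon \mathcal{H}_j \to \mathcal{H}_i$ (for $i < j$), with $\mu_1 = \mu_j$ and $\mu_2 = \mu_i$: we have $P_{ij}(\mu_j) = \mu_i$ if and only if $S_{\mu_j}\, P_{ij}^* = S_{\mu_i}$, i.e. $S_{\mu_j}\, I_{ij} = S_{\mu_i}$. The left-hand conditions, ranging over all $i<j$, say exactly that $(\mathcal{H}_i, I_{ij}^*, \mu_i)$ is a projective system of measures, while the right-hand conditions say exactly that $(\mathcal{H}_i, I_{ij}, S_{\mu_i})$ is an inductive system of characteristic functionals. (The continuity and positive definiteness of each $S_{\mu_i}$ are automatic from the theory of characteristic functionals as recalled before Lemma \ref{funcpropcharfu}.)

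The only genuinely nontrivial ingredient is the approximation step in part (a); the rest consists of assembling the pieces already established. No separate obstacle arises in part (b), since Lemma \ref{funcpropcharfu} converts the defining conditions of the two systems into one another index by index.
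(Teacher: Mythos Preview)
Your proposal is correct and follows essentially the same route as the paper: invoke Lemma~\ref{leonequicof} for part~(a) and Lemma~\ref{funcpropcharfu} for part~(b). In fact you are more careful than the paper, which simply asserts that part~(a) ``follows immediately from Lemma~\ref{leonequicof}'' without separately verifying that the limit $S_\infty$ is positive definite; your density-and-approximation argument supplies exactly that missing detail.
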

\begin{proof}
Part (a) follows immediately from Lemma \ref{leonequicof}, while part (b) follows from Lemma \ref{funcpropcharfu}.
\end{proof}

\begin{example}\label{gaumesexam}
Let $\mu_{A_i}$ be the centered Gaussian probability measure on $\mathcal{H}_i$ with covariance operator $A_i$ so that one has 
$$ S_{\mu_{A_i}}(h)=e^{-\frac{\langle A_i h, h\rangle_i}{2}}, 
	\quad h \in \mathcal{H}_i.$$
Then one easily sees that $(\mathcal{H}_i,I_{ij},S_{\mu_{A_i}})$ is an inductive system of characteristic functionals if and only if 
\begin{equation}\label{consgaucova}
I_{ij}^{*} A_j I_{ij}=A_i
\end{equation}
holds for all $j>i$.

Now assume in addition that the operators $A_i$ are uniformly bounded, i.e. there exists $K>0$ such that $\|A_i \|_i \leq K $ for all $i$. Then it is not hard to see that the functionals $S_{\mu_{A_i}}P_i$ are equicontinuous. Indeed, a short computation yields
$$ \int_{\mathcal{H}_i}|e^{i\langle h_1, h \rangle_{i}}
-e^{i\langle h_2, h \rangle_{i}} |^2 d\mu_{A_i}(h)=
2(1-e^{-\|A^{1/2}_{i}(h_1-h_2)\|_{i}^{2}/2})$$
for all $h_1,h_2 \in \mathcal{H}_i$. It follows that 
$$| S_{\mu_{A_i}}(P_i(h_1))-S_{\mu_{A_i}}(P_i(h_2))| \leq 
\sqrt{2}(1-e^{-\|A^{1/2}\|_{i}\|h_1-h_2\|^{2}/2})^{1/2}$$
for all $h_1,h_2 \in \mathcal{H}_{\infty}$, which implies the desired equicontinuity. Thus by Proposition \ref{prooncharfu}(a) there exists a unique continuous positive definite functional $S_{\infty}$ on $\mathcal{H}_{\infty}$ such that $S_{\infty}I_i=S_{\mu_{A_i}}$ for all $i$.
\end{example}

\section{The massless free scalar field and discrete Laplacians}\label{onedimsec}

\subsection{Basic notions} 

Let $\{K_i\}_{i=0}^{\infty}$ be a sequence of $d$-dimensional finite polyhedral cell complexes embedded in $\mathbb{R}^d$ such that $K_i$ is a subdivision of $K_{i-1}$ for all $i$. 

We write $C^{k}(K_i)$ for the space of (oriented) real-valued $k$-cochains on $K_i$ and denote by $d_i:C^{0}(K_i) \rightarrow :C^{1}(K_i)$ the usual coboundary operator (see e.g. \cite[\S 42]{Mu}). Further, we write  $C^{0,0}(K_i)$ for the subspace of $C^{0}(K_i)$ consisting of all cochains vanishing on the vertices belonging to the (topological) boundary of $K_i$ regarded as a subspace of $\mathbb{R}^d$.

Given inner products $\langle \cdot, \cdot \rangle_{k,i}$ on $C^{k}(K_i)$ for $k=0,1$, we define positive discrete Laplacians on $C^{0}(K_i)$ and $C^{0,0}(K_i)$ via setting
$$ \Delta_i=d_id^*_i  ,\quad \Delta_{i,0}=d_{i,0}d^*_{i,0}.$$

From now on we assume that the cell complexes $K_i$ are connected manifolds with non-empty boundary. It follows that the restricted coboundary operators $d_{i,0}$ are injective and the Dirichlet Laplacians $\Delta_{i,0}$ are invertible, which in turn allows us to define centered Gaussian probability measures
\begin{equation}\label{frfmedes}
d\mu_{i,0}(c)=z_i e^{-\frac{1}{2}\langle c, \Delta_{i,0} c\rangle _{0,i}} d c
\end{equation}
on $C^{0,0}(K_i)$, where $z_i$ is a normalization constant and $dc$ stands for the Lebesgue measure on $C^{0,0}(K_i)$. 
We note that $\mu_{i,0}$ has covariance  $\Delta^{-1}_{i,0}$.

\begin{example}\label{cubexam}
We take $K_0$ to be the $d$-dimensional polyhedral cell complex embedded in $\mathbb{R}^d$ whose single $d$-cell is the cube $[-1,1]^d$. For each positive integer $i$ we define inductively $K_i$ to be the polyhedral cell complex obtained from $K_{i-1}$ by subdividing each $d$-cell in $K_{i-1}$ into $2^d$ congruent cubes having side equal to $1/2^{i-1}$ Thus $K_i$ is a $d$-dimensional cubical lattice with mesh $1/2^{i-1}$.

We introduce the following inner products on $C^{k}(K_i)$. First, we write $\langle \cdot, \cdot \rangle_{k,i,0}$ for the canonical inner product on $C^{k}(K_i)$ with respect to which the $k$-cells form orthonormal basis. Second, we set
\begin{equation}\label{defscalinnp}
	\langle c_1, c_2 \rangle_{k,i}= 2^{(i-1)(2k-d)} \langle c_1, c_2 \rangle_{k,i,0}, \quad c_1,c_2 \in 
	C^{k}(K_i).
\end{equation}
For the motivation behind the scaling factor $2^{(i-1)(2k-d)}$ appearing above, the reader is referred to \cite[Lemma 7.22]{DP}. It is easy to see that in this case the discrete Laplacians introduced above coincide with the standard finite difference Laplacians on cubical lattices as defined e.g. in \cite[Section 9.5]{GJ}, hence the measures (\ref{frfmedes}) represent the free massless Euclidean scalar field on the lattices $K_i$.
\end{example}

\subsection{The projective system of measures and the continuum limit} 

We now assume that we are given linear surjections $$P^k_{i}: C^{k}(K_i) \rightarrow C^{k}(K_{i-1})$$ for all $i>0$ and $k=0,1$, such that 
$P^0_{i}$ maps $C^{0,0}(K_i)$ into $C^{0,0}(K_{i-1})$ and one has 
\begin{equation}\label{cochamapco}
	P^1_{i}d_{i,0} =d_{i-1,0} P^0_{i}, \quad i>0.
\end{equation}
In the special case of Example \ref{cubexam}, one can take $P^0_{i}$ to be  the map given by restriction of 0-chains and $P^1_{i}$ to be the map that sends a $1$-cochain $c \in C^{1}(K_i)$ to the cochain whose value at a 1-cell (edge) $e$ in $K_{i-1}$ is the sum of the values of $c$ at the two sub-cells of $e$ belonging to $K_{i}$. It is well-known (and easy to check) that $P^0_{i}$ and $P^1_{i}$ form a cochain map, i.e. (\ref{cochamapco}) is satisfied. Natural subdivision maps obeying (\ref{cochamapco}) exist also in the case when all $K_i$ are simplicial complexes (see \cite[Chapter 2, \S 17]{Mu}).

We observe that by (\ref{cochamapco}),  $P^1_{i}$ maps $\mathtt{Im}\hspace{1pt}d_{i,0}$ to  $\mathtt{Im}\hspace{1pt}d_{i-1,0}$.
We shall now construct new, ``renormalized'' inner products $\langle \cdot, \cdot \rangle_{1,i}^r$ on $\mathtt{Im}\hspace{1pt}d_{i,0}$ out of the inner products $\langle \cdot, \cdot \rangle_{1,i}$  turning the maps $P^1_{i}$ into co-isometries. We set 
$$ \langle \cdot, \cdot \rangle_{1,0}^r=\langle \cdot, \cdot \rangle_{1,0}$$
and, using the decomposition
\begin{equation}\label{basiorthde}
\mathtt{Im}\hspace{1pt}d_{i,0}=\mathtt{Im}\hspace{1pt}(P^1_{i})^* \oplus (\mathtt{Im}\hspace{1pt}(P^1_{i})^*)^{\perp},
\end{equation} 
for every $i>0$ inductively define $\langle \cdot, \cdot \rangle_{1,i}^r$   to be the pullback of $\langle \cdot, \cdot \rangle_{1,i-1}^r$ via $(P^1_{i})$ on $\mathtt{Im}\hspace{1pt}(P^1_{i})^*$ and to coincide with $\langle \cdot, \cdot \rangle_{1,i}$ on $(\mathtt{Im}\hspace{1pt}(P^1_{i})^*)^{\perp}$.
Finally, we declare the two summands in (\ref{basiorthde}) to be orthogonal with respect to  $\langle \cdot, \cdot \rangle_{1,i}^r$. (Above, the adjoints and orthogonal complements are taken with respect to the inner products $\langle \cdot, \cdot \rangle_{1,i}$ restricted to $\mathtt{Im}\hspace{1pt}d_{i,0}$.)

Next we define projective systems of $0$ and $1$-cochains via setting
$$P^k_{ij}:=P^k_{i} P^k_{i-1} \cdots  P^k_{j+1}, \quad i>j, \\ k=0,1,$$
and note that 
\begin{equation}\label{cochamapcoij}
P^1_{ij}d_{i,0} =d_{j,0} P^0_{ij}, \quad i>j.
\end{equation}
Finally, we introduce the ``renormalized'' version of the free massless  scalar field measures (\ref{frfmedes}):
\begin{equation}\label{frfmedestwo}
	d\mu_{i,r}(c)=z_{i,r} e^{-\frac{1}{2}\langle c, \Delta_{i,r} c\rangle _{0,i}} d c,
\end{equation}
 where $z_{i,r}$ is a normalization constant, $dc$ stands for the Lebesgue measure on $C^{0,0}(K_i)$, and  $\Delta_{i,r}=d_{i,0}d^{*,r}_{i,0}$, where $*,r$ stands for the adjoint taken with respect to the inner product $\langle \cdot, \cdot \rangle_{1,i}^r$.

We shall make use of the following simple fact.
\begin{lemma}\label{chmeasulem}
Let $\mathcal{V}_1$ and $\mathcal{V}_2$ be two finite dimensional real vector spaces and let $L$ be a linear isomorphism. Let $f: \mathcal{V}_2 \to \mathbb{R}$ be a positive continuous integrable function.

Define Borel probability measures on $\mathcal{V}_i$
via 
\[ d\mu_1(v_1) = N_1 f(L(v_1)) \, dv_1, \]
\[ d\mu_2(v_2) = N_2 f(v_2) \, dv_2, \]
where $dv_i$ stands for the Lebesgue measure on $\mathcal{V}_i$ and $N_i$ is a normalization constant. Then $\mu_1 = L^{-1}(\mu_2)$.
\end{lemma}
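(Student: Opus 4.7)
The plan is to verify the identity $\mu_1 = L^{-1}(\mu_2)$ by computing the density of the pushforward $L^{-1}(\mu_2)$ with respect to the Lebesgue measure $dv_1$ on $\mathcal{V}_1$ and comparing it to the stated density of $\mu_1$. Since both measures are finite and Borel, and any finite Borel measure on a finite-dimensional real vector space is determined by its density relative to Lebesgue measure (once one is absolutely continuous with respect to the other), the comparison of densities will suffice.

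First I would fix auxiliary linear isomorphisms $\mathcal{V}_i \cong \mathbb{R}^n$ (where $n = \dim \mathcal{V}_1 = \dim \mathcal{V}_2$, since $L$ is an isomorphism), which pulls back the Lebesgue measures to the standard Lebesgue measure on $\mathbb{R}^n$ up to a positive constant. In these coordinates, the standard linear change of variables formula yields, for any Borel set $A \subseteq \mathcal{V}_1$,
\begin{equation*}
L^{-1}(\mu_2)(A) \;=\; \mu_2(L(A)) \;=\; \int_{L(A)} N_2\, f(v_2)\, dv_2 \;=\; |\det L| \int_{A} N_2\, f(L(v_1))\, dv_1,
\end{equation*}
where $\det L$ is computed with respect to the chosen bases. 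Hence $L^{-1}(\mu_2)$ has density $|\det L|\, N_2\, f(L(v_1))$ with respect to $dv_1$.

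Next I would pin down the two normalization constants. By integrating the densities to $1$ (both $\mu_1$ and $\mu_2$ are probability measures), the same change of variables applied to $\int_{\mathcal{V}_1} f(L(v_1))\, dv_1$ shows $N_1^{-1} = |\det L|^{-1} N_2^{-1}$, i.e.\ $N_1 = |\det L|\, N_2$. Substituting back, the density of $L^{-1}(\mu_2)$ coincides exactly with the density $N_1\, f(L(v_1))$ of $\mu_1$. The only point requiring a little care is that the factor $|\det L|$ depends on the choice of bases identifying the $\mathcal{V}_i$ with $\mathbb{R}^n$, but the same choice is used in both normalization integrals, so the ambiguity cancels; this is the only mild subtlety and hardly a genuine obstacle.
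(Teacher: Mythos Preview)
Your proof is correct and follows exactly the approach the paper indicates: the paper's own proof is a single sentence stating that the lemma ``follows easily from the change of variables formula for pushforward measures,'' and you have simply written out that change of variables computation explicitly, including the observation that the Jacobian factor is absorbed into the normalization constants.
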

\begin{proof}
This follows easily from the change of variables formula for pushforward measures.
\end{proof}
\begin{theorem}\label{mainprojsythe}
The triple $(C^{0,0}(K_i),P^0_{ij},\mu_{i,r})$ is a projective system of measures.
\end{theorem}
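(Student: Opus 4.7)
The plan is to reduce the claim to single-step pushforwards, exploit the Gaussian structure to convert it into an operator identity for covariances, and then verify that identity by combining the intertwining (\ref{cochamapco}) with the co-isometry property built into the renormalized inner product.

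First, since $P^0_{ij}=P^0_{j+1}\cdots P^0_i$ by definition, by induction on $i-j$ it suffices to prove $P^0_i(\mu_{i,r})=\mu_{i-1,r}$ for every $i\geq 1$. Both $\mu_{i,r}$ and $\mu_{i-1,r}$ are centered Gaussians with covariances $\Delta_{i,r}^{-1}$ and $\Delta_{i-1,r}^{-1}$ respectively, so Lemma \ref{funcpropcharfu} applied to $L=P^0_i$, combined with polarization of the quadratic forms appearing in the two characteristic functionals, reduces the desired pushforward identity to the operator identity
\[
    P^0_i\,\Delta_{i,r}^{-1}\,(P^0_i)^{*} \;=\; \Delta_{i-1,r}^{-1},
\]
where $(P^0_i)^{*}$ is the adjoint with respect to $\langle\cdot,\cdot\rangle_{0,i-1}$ and $\langle\cdot,\cdot\rangle_{0,i}$.

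To verify this operator identity I would factor everything through the image subspaces. Regarding $d_{i,0}\colon C^{0,0}(K_i)\to\mathtt{Im}\,d_{i,0}$ as an isomorphism and $d^{*,r}_{i,0}\colon\mathtt{Im}\,d_{i,0}\to C^{0,0}(K_i)$ (its adjoint with respect to $\langle\cdot,\cdot\rangle_{0,i}$ and $\langle\cdot,\cdot\rangle^{r}_{1,i}$) as another isomorphism, one has $\Delta_{i,r}^{-1}=d_{i,0}^{-1}(d^{*,r}_{i,0})^{-1}$. The intertwining (\ref{cochamapco}) together with injectivity of $d_{i-1,0}$ gives $P^0_i=d_{i-1,0}^{-1}P^1_i d_{i,0}$; taking adjoints, using $\langle\cdot,\cdot\rangle_{0,\cdot}$ on the $C^{0,0}$ spaces and $\langle\cdot,\cdot\rangle^{r}_{1,\cdot}$ on the image spaces, yields $(P^0_i)^{*}=d^{*,r}_{i,0}(P^1_i)^{*,r}(d^{*,r}_{i-1,0})^{-1}$. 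Substituting these into $P^0_i\,\Delta_{i,r}^{-1}\,(P^0_i)^{*}$ and cancelling the adjacent pairs $d_{i,0}d_{i,0}^{-1}$ and $(d^{*,r}_{i,0})^{-1}d^{*,r}_{i,0}$ leaves
\[
    P^0_i\,\Delta_{i,r}^{-1}\,(P^0_i)^{*} \;=\; d_{i-1,0}^{-1}\,P^1_i(P^1_i)^{*,r}\,(d^{*,r}_{i-1,0})^{-1}.
\]

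The crux is then to invoke $P^1_i(P^1_i)^{*,r}=\mathrm{Id}_{\mathtt{Im}\,d_{i-1,0}}$, i.e. that $P^1_i|_{\mathtt{Im}\,d_{i,0}}$ is a co-isometry with respect to the renormalized inner products. This is precisely the point of the construction of $\langle\cdot,\cdot\rangle^{r}_{1,i}$: in the decomposition (\ref{basiorthde}), the summand $\mathtt{Im}(P^1_i)^{*}$ is mapped isometrically onto $\mathtt{Im}\,d_{i-1,0}$ by the pullback definition, while its $\langle\cdot,\cdot\rangle^{r}_{1,i}$-orthogonal complement equals $\ker P^1_i|_{\mathtt{Im}\,d_{i,0}}$ and is therefore annihilated; a direct computation then gives $P^1_i(P^1_i)^{*,r}=\mathrm{Id}$. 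Plugging this in collapses the right-hand side to $d_{i-1,0}^{-1}(d^{*,r}_{i-1,0})^{-1}=\Delta_{i-1,r}^{-1}$, as required. The main obstacle I expect is the bookkeeping for three distinct inner products — $\langle\cdot,\cdot\rangle_{0,\cdot}$ on the $C^{0,0}$ spaces, the original $\langle\cdot,\cdot\rangle_{1,\cdot}$ used to single out $\mathtt{Im}(P^1_i)^{*}$, and the renormalized $\langle\cdot,\cdot\rangle^{r}_{1,\cdot}$ used to take the adjoints $(P^1_i)^{*,r}$ and $d^{*,r}_{i,0}$ — together with the correct identification of the domains and codomains of the various inverses.
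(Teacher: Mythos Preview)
Your proof is correct and follows essentially the same strategy as the paper's. The paper phrases the argument in terms of auxiliary standard Gaussian measures $\nu_i$ on $\mathtt{Im}\,d_{i,0}$ and their characteristic functionals (using Lemma~\ref{chmeasulem} to identify $\mu_{i,r}=d_{i,0}^{-1}(\nu_i)$ and then Lemma~\ref{funcpropcharfu}), whereas you work directly with the covariance operators; but the underlying mechanism --- factoring $P^0_i$ through $P^1_i$ via the intertwining (\ref{cochamapco}) and then invoking the co-isometry property $P^1_i(P^1_i)^{*,r}=\mathrm{Id}$ built into the renormalized inner product --- is identical.
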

\begin{proof}
We write $d_{i,0}^{-1}$ for the inverse of the linear isomorphism from $C^{0,0}(K_i)$ to $\mathtt{Im}\hspace{1pt}d_{i,0}$ given by $d_{i,0}$ and $\nu_i$ for the centered Gaussian probability measure on $\mathtt{Im}\hspace{1pt}d_{i,0}$ with characteristic functional
$$S_{\nu_i} (c)=e^{-\frac{1}{2}\langle c, c \rangle_{1,i}^r},\quad c \in \mathtt{Im}\hspace{1pt}d_{i,0}.$$
Using Lemma \ref{chmeasulem}, we conclude that $\mu_{i,r}=d_{i,0}^{-1}(\nu_i)$, therefore one has
\begin{equation}\label{charconide}
S_{\mu_{i,r}} =S_{\nu_i}(d_{i,0}^{*,r})^{-1}
\end{equation}
by Lemma \ref{funcpropcharfu}. As observed above, the maps $(P^1_{ij})^{*,r}$ are isometries, hence (cf. Example \ref{gaumesexam}) $(\mathtt{Im}\hspace{1pt}d_{i,0},(P^1_{ij})^{*,r}, S_{\nu_i})$ is an inductive system of characteristic functionals, i.e. one has
\begin{equation}\label{twocompcharco}
	S_{\nu_{i}}(P^1_{ij})^{*,r} = S_{\nu_{j}}.
\end{equation}
Further, we infer from (\ref{cochamapcoij}) that
\begin{equation}\label{conjchainma}
(d_{i,0}^{*})^{-1}(P^0_{ij})^{*,r} 
= (P^1_{ij})^{*,r}(d_{j,0}^{*})^{-1}, \quad i>j.
\end{equation}
It follows immediately from (\ref{charconide}), (\ref{twocompcharco}) and (\ref{conjchainma}) that
$ S_{\mu_{i,r}}(P^0_{ij})^{*,r} = S_{\mu_{j,r}}$, which by Lemma \ref{funcpropcharfu} implies that 
$P^0_{ij}(\mu_{i,r})=\mu_{j,r}$.
\end{proof}
\begin{remark}
(1) By the general existence theorem mentioned in the beginning of Section \ref{Indlimfuncsec}, the projective system of measures $(C^{0,0}(K_i),P^0_{ij},\mu_{i,r})$ has a limit measure $\mu_{\infty,r}$ defined on $C^{0,0}_{\infty}$, the limit of the projective system $(C^{0,0}(K_i),P^0_{ij})$.  It remains unclear whether and in what sense $\mu_{\infty,r}$, in the special case described in Example \ref{cubexam}, may be identified with the continuum massless free field which exists as a Gaussian Borel probability measure on a suitable space of distributions on $[-1,1]^d$.

(2) Consider the special case of Example \ref{cubexam} and suppose that  $d=1$,  so that the coboundary operators $d_{i,0}$ are surjective. In this situation 
no renormalized inner products are needed, as the maps $P^1_{ij}$ are already isometries (up to normalization factors) with respect to the inner products $\langle \cdot, \cdot \rangle_{1,i}$. Moreover, the fact that $(C^{0,0}(K_i),P^0_{ij},\mu_{i,r})$ form a projective system of measures may be established by direct computation, either by utilizing the convolution semigroup property of the heat kernel on $\mathbb{R}$, or using the explicit formula for the Green operator $\Delta^{-1}_{i,0}$ obtained in \cite[Theorem 3]{CY}. The heat kernel convolution semigroup method has already been employed in \cite{VM}  to show the existence of a projective system of measures associated to 2-dimensional lattice gauge theory.

(3) Let us write $\mathtt{Im}\hspace{1pt}d_{\infty,0}$  for the projective limit of $(\mathtt{Im}\hspace{1pt}d_{i,0},P^1_{ij})$ in the category of locally convex spaces and $\nu_{\infty}$ for the projective limit of the sequence of measures $\nu_i$ considered in the proof of Theorem \ref{mainprojsythe}. By properties of projective limits the maps $d_{i,0}^{-1}$  induce a continuous map $d_{\infty,0}^{-1}: \mathtt{Im}\hspace{1pt}d_{\infty,0} \to C^{0,0}_{\infty}$ and one has $\mu_{\infty,r}=d_{\infty,0}^{-1}(\nu_{\infty})$. This alternative description of $\mu_{\infty,r}$ suggests a different construction of a measure $\tilde{\mu}_{\infty,r}$ on $C^{0,0}_{\infty}$ representing the massless free field on $[-1,1]^d$ which uses only the initial inner products $\langle \cdot, \cdot \rangle_{1,i}$. Consider the inductive system $((\mathtt{Im}\hspace{1pt}d_{i,0},\langle \cdot, \cdot \rangle_{1,i}),(P^1_{ij})^{*})$ and denote by $\mathtt{Im}\hspace{1pt}d^{\hspace{1pt}\infty}_{0}$ and $\mathtt{Im}\hspace{1pt}d^{H}_{0}$ its inductive limits in the categories of locally convex spaces and Hilbert spaces and linear contractions, respectively. Identifying the dual of $\mathtt{Im}\hspace{1pt}d^{\hspace{1pt}\infty}_{0}$ with $\mathtt{Im}\hspace{1pt}d_{\infty,0}$, we obtain via the Minlos theorem a centered Gaussian probability measure $\tilde{\nu}_{\infty}$ on $\mathtt{Im}\hspace{1pt}d_{\infty,0}$ with characteristic functional $e^{-\frac{1}{2}\langle \cdot , \cdot \rangle_{1,\infty}}$, where $\langle \cdot , \cdot \rangle_{1,\infty}$ is the inner product on $\mathtt{Im}\hspace{1pt}d^{H}_{0}$, and set $\tilde{\mu}_{\infty,r}=d_{\infty,0}^{-1}(\tilde{\nu}_{\infty})$. Clearly $\tilde{\mu}_{\infty,r}=\mu_{\infty,r}$ when $d=1$ but in higher dimensions these two measures do not necessarily coincide.
\end{remark}

\section{The massive free scalar field and discretized covariances}\label{rimmandsec}
\subsection{The de Rham and Whitney maps}\label{derhamwhisec}
Let $M$ be a closed oriented Riemannian $d$-dimensional manifold and consider a sequence $\{K_{i}\}_{i=1}^{\infty}$ of smooth triangulations of $M$ whose fullness is bounded away from 0. We denote the mesh (diameter) of $K_{i}$ by $h_{i}$ and assume that $h_{i}\rightarrow 0$  as $i \rightarrow \infty$. (For the definitions of fullness and mesh, of which we shall not make use explicitly, we refer the reader to \cite[Section 2]{DP}.) Assume further that $K_{i+1}$ is a subdivision of $K_{i}$ for each $i$. 

In what follows, we write $C^n(K_{i})$ for the space of the real-valued $n$-cochains on the simplicial complex $K_{i}$ and $C(M)$ for the continuous real-valued functions on $M$. For every $i$ we define a linear map $R_i: C(M) \rightarrow C^0(K_{i})$ via
$$ R_i(f)(v)=f(v), \quad f \in C(M),$$
where $v$ is any vertex of $K_{i}$. Further, we define a mapping $W_i: C^0(K_{i}) \rightarrow C(M)$ by setting
\begin{equation}\label{whimapdefi}
 W_i(v)=\mu_v,
\end{equation} 
where $\mu_v$ stands for the barycentric coordinate function of the vertex $v$ in $K_{i}$, and extending by linearity. The maps $R_i$ and $W_i$ are special case of the so called {\em de Rham} and {\em Whitney maps} acting between cochains and piecewise smooth differential forms which were introduced in \cite{Do}.

We note that $\mathtt{Im}W_i$ consists of piecewise affine functions and summarize some of the basic properties of $R_i$ and $W_i$ in the following proposition.

\begin{proposition}\label{whitmapprope} (a) $R_{i}W_{i}=\text{Id}$.
	
	(b) $\lim_{i \rightarrow \infty} \|W_{i}R_{i}f-f\|_{L^2(M)}=0$ for every $f \in C^{\infty}(M)$.
	
	(c) $\cup_{i}\mathtt{Im}W_i$ is dense in $L^2(M)$.
	\end{proposition}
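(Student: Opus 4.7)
The plan is to handle each part in turn, using little more than the defining properties of $R_i$ and $W_i$ together with standard approximation arguments.

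Part (a) should be essentially tautological: for any two vertices $v,v'$ of $K_i$, the defining property of the barycentric coordinate function gives $\mu_v(v')=\delta_{vv'}$, hence
\begin{equation*}
R_i(W_i(v))(v')=\mu_v(v')=\delta_{vv'},
\end{equation*}
which is precisely the value at $v'$ of the $0$-cochain represented by the basis element $v$. Extending by linearity closes the argument.

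For part (b), I would first observe that $W_iR_if$ is the piecewise-affine interpolant of $f$ at the vertices of $K_i$. If $p$ lies in a top-dimensional simplex $\sigma$ of $K_i$, then $\{\mu_v(p)\}_{v\in\sigma}$ is a nonnegative partition of unity, so
\begin{equation*}
|(W_iR_if)(p)-f(p)|=\Bigl|\sum_{v\in\sigma}\bigl(f(v)-f(p)\bigr)\mu_v(p)\Bigr|\le\max_{v\in\sigma}|f(v)-f(p)|\le\omega_f(h_i),
\end{equation*}
where $\omega_f$ denotes the modulus of continuity of $f$ with respect to the Riemannian distance on $M$. Since $f\in C^\infty(M)$ is uniformly continuous on the compact $M$ and $h_i\to 0$, this forces $\|W_iR_if-f\|_\infty\to 0$; finite volume of $M$ then upgrades this to the claimed $L^2$-convergence.

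Part (c) I would deduce from (b) together with the standard density of $C^\infty(M)$ in $L^2(M)$. Given $g\in L^2(M)$ and $\varepsilon>0$, choose $f\in C^\infty(M)$ with $\|g-f\|_{L^2}<\varepsilon/2$, then invoke (b) to select $i$ with $\|W_iR_if-f\|_{L^2}<\varepsilon/2$; the triangle inequality places $W_iR_if\in\mathtt{Im}\,W_i$ within $\varepsilon$ of $g$. The only subtlety I anticipate is interpreting ``barycentric coordinates'' intrinsically on the Riemannian manifold: they are defined on the abstract simplicial complex and transported to $M$ via the smooth triangulation, so the modulus-of-continuity bound above is really a statement about $f$ pulled back to $|K_i|$. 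This is a routine check enabled by the smoothness of the $K_i$ and the bounded fullness hypothesis, and I do not expect any serious obstacle.
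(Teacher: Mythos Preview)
Your proposal is correct. Parts (a) and (c) match the paper's reasoning essentially verbatim: the paper says (a) is ``clear from the definitions'' and derives (c) from (b) via density of $C^\infty(M)$ in $L^2(M)$, exactly as you do.

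For part (b) the paper takes a different route: it simply invokes the approximation results of Dodziuk--Patodi \cite[Section~2]{DP}, which establish convergence $W_iR_if\to f$ in Sobolev norms for forms of all degrees. Your argument is instead a direct, elementary sup-norm estimate via the modulus of continuity, exploiting that the barycentric coordinates form a nonnegative partition of unity. This buys you a self-contained proof that in fact does \emph{not} require the bounded-fullness hypothesis (fullness enters in \cite{DP} only to control derivatives, which is irrelevant for $0$-cochains in $L^2$); conversely, the paper's citation covers the higher-degree and higher-regularity cases in one stroke. Your closing caveat about transporting barycentric coordinates to $M$ is well placed but, as you suspect, harmless: the mesh $h_i$ is measured in the Riemannian metric, so for $p$ in the image of a simplex $\sigma$ the vertices of $\sigma$ lie within Riemannian distance $h_i$ of $p$, which is all your estimate needs.
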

	\begin{proof}
Part (a) is clear from the definitions, while part (b) is a special case of the approximation results in \cite[Section 2]{DP}. Part (c) follows from part (b) since smooth functions are dense in $L^2(M)$.
	\end{proof}
Given $a,b \in C^0(K_{i})$ we call
\begin{equation}\label{whitinnerp}
\langle c_1,c_2\rangle_{W,i} = \langle W_{i}c_1, W_{i}c_2\rangle_{L^2(M)}
\end{equation}
the {\em Whitney inner product} of the 0-cochains $c_1$ and $c_2$. For all $j\geq i$ we define linear maps
$I_{ij}^{W}:  C^0(K_{i}) \rightarrow C^0(K_{j})$ by setting
\begin{equation}\label{imapdefi}
I_{ij}^{W}= R_j W_i.
\end{equation} 

\begin{proposition}\label{indusyswhi}
(a) The pair $( C^0(K_{i}), I_{ij}^{W})$, where $C^0(K_{i})$ is equipped with the Whitney inner product, is an inductive system in $\mathbf{Hilb}_1$  and the maps $I_{ij}^{W}$ are isometries. 

(b) The pair $( L^2(M), W_i)$ is a coherent system for $( C^0(K_{i}), I_{ij}^{W})$, i.e. 
the maps $W_i:  C^0(K_{i}) \rightarrow L^2(M)$ are isometries and one has 
$W_jI_{ij}^{W}=W_i$ for every $i$.

\end{proposition}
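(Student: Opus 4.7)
The plan is to derive both parts from a single key identity, namely
$W_j R_j W_i = W_i$ for all $i\leq j$, and then read off the remaining conditions essentially from the definitions together with Proposition \ref{whitmapprope}(a).

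First I would establish the identity $W_j R_j W_i = W_i$. The underlying geometric fact is that $\mathtt{Im}\,W_i \subseteq \mathtt{Im}\,W_j$ whenever $K_j$ subdivides $K_i$: a function that is affine on every top-dimensional simplex of $K_i$ is automatically affine on every top-dimensional simplex of $K_j$, because each simplex of $K_j$ sits inside a simplex of $K_i$, and continuity is preserved. Hence, for any $c\in C^0(K_i)$, the piecewise affine function $W_i c$ lies in $\mathtt{Im}\,W_j$, and can therefore be written as $W_j c'$ for some $c'\in C^0(K_j)$; applying $R_j$ and using $R_j W_j = \mathrm{Id}$ from Proposition \ref{whitmapprope}(a), one identifies $c' = R_j W_i c$, which gives exactly $W_j R_j W_i c = W_i c$.

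Given this identity, part (a) is immediate. The composition law
$I_{jk}^W I_{ij}^W = R_k W_j R_j W_i = R_k W_i = I_{ik}^W$ holds for $i\leq j\leq k$, and $I_{ii}^W = R_i W_i = \mathrm{Id}$ by Proposition \ref{whitmapprope}(a). For the isometry property, expand using the definition of the Whitney inner product:
\[
\langle I_{ij}^W c_1, I_{ij}^W c_2 \rangle_{W,j}
= \langle W_j R_j W_i c_1,\, W_j R_j W_i c_2 \rangle_{L^2(M)}
= \langle W_i c_1, W_i c_2 \rangle_{L^2(M)}
= \langle c_1, c_2 \rangle_{W,i},
\]
which in particular makes the $I_{ij}^W$ contractions, so $(C^0(K_i), I_{ij}^W)$ sits in $\mathbf{Hilb}_1$.

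Part (b) is then essentially a bookkeeping step: the fact that each $W_i$ is an isometry from $(C^0(K_i),\langle\cdot,\cdot\rangle_{W,i})$ into $L^2(M)$ is literally the definition (\ref{whitinnerp}) of the Whitney inner product, and the coherence condition $W_j I_{ij}^W = W_j R_j W_i = W_i$ is another direct consequence of the key identity above. The only genuine obstacle in the whole argument is verifying the inclusion $\mathtt{Im}\,W_i \subseteq \mathtt{Im}\,W_j$, i.e.\ that subdivision does not destroy piecewise affineness; everything else is formal manipulation using $R_j W_j = \mathrm{Id}$ and the definitions.
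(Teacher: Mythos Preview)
Your proposal is correct and follows essentially the same route as the paper: both hinge on the identity $W_j R_j W_i = W_i$, and then deduce the inductive system axioms, the isometry of $I_{ij}^W$, and the coherence $W_j I_{ij}^W = W_i$ directly from it together with $R_jW_j=\mathrm{Id}$ and the definition of the Whitney inner product. The only cosmetic difference is in how the key identity is justified: you phrase it as the image inclusion $\mathtt{Im}\,W_i \subseteq \mathtt{Im}\,W_j$ followed by an application of $R_jW_j=\mathrm{Id}$, whereas the paper argues directly that $W_i c$ and $W_j R_j W_i c$ are both affine on each $d$-simplex of $K_j$ and agree on its vertices, hence coincide; these are two packagings of the same geometric observation.
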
	
\begin{proof}
We first observe that one has
\begin{equation}\label{mainwrw}
W_jR_jW_i=W_i
\end{equation}
whenever $j>i$. Indeed, for any $c \in  C^0(K_{i})$ both $W_ic$ and $W_jR_jW_ic$ are continuous functions on $M$ whose restrictions to each 
 $d$-simplex in $K_{j}$ are affine and take the same values on the vertices of the simplex. Since an affine function on a simplex is uniquely determined by its values on the vertices of the simplex (see \cite{Do2} for a generalization of this fact), we conclude that $W_ic$ and $W_jR_jW_ic$ coincide, thereby proving part (b).
 
Further, using (\ref{mainwrw}), we find that
 $$I_{kj}^{W}I_{ij}^{W}= R_k W_jR_j W_i =R_k  W_i =I_{ik}^{W},$$
 whenever $k>j>i$. Similarly, again by (\ref{mainwrw}), one has
 $$ \langle I_{ij}^{W}c_1,I_{ij}^{W}c_2\rangle_{W,j} = \langle W_{j}R_j W_ic_1,  W_{j}R_j W_ic_1c_2\rangle_{L^2(M)}= \langle c_1,c_2\rangle_{W,i} $$
for all $c_1,c_2 \in C^0(K_{i})$, hence the maps $I_{ij}^{W}$ are isometric and part (a) is established. 
\end{proof}

We note that by Proposition \ref{whitmapprope}(c) and the construction of inductive limits in $\mathbf{Hilb}_1$ there is a natural isometric isomorphism between the inductive limit Hilbert space of the system $( C^0(K_{i}), I_{ij}^{W})$, which we shall denote by $C^0_{\infty}(M)$, and $L^2(M)$.
\begin{remark}
Utilizing the de Rham and Whitney maps for higher degree cochains, one can define Whitney inner product on $C^n(K_{i})$ as in (\ref{whitinnerp}) and maps $I_{ij}^{W,n}:  C^n(K_{i}) \rightarrow C^n(K_{j})$ as in (\ref{imapdefi}) for $n>0$. However, $I_{ij}^{W,n}$ do not form an inductive system anymore since (\ref{mainwrw}) does not hold on cochains of positive degree. In view of the approximation property from Proposition \ref{whitmapprope}(b), it is natural to conjecture that $I_{ij}^{W,n}$ define a {\em soft inductive system} of Banach spaces, a weaker notion recently introduced in \cite[Section 3]{LSW}.
\end{remark}
\subsection{The inductive system of characteristic functionals}
We denote by $\Delta$ the positive Laplacian on $M$ acting on real functions and recall that $A^{\Delta}:=(1+\Delta)^{-1}$ is a bounded positive operator on
$L^2(M)$. By Minlos' theorem, one obtains a centered Gaussian Borel probability measure on $\mathcal{D}'(M)$, the distributions on $M$, with covariance $A^{\Delta}$ which represents the free scalar Euclidean quantum field of mass 1 on $M$ (cf. \cite[Chapter 6]{GJ}.) 

We define operators $A_i^{\Delta}$ on $C^0(K_{i})$ via

\begin{equation}\label{aioperdefj}
A_i^{\Delta}=W_i^{*}A^{\Delta}W_i
\end{equation}
where the adjoint is taken with respect to the Whitney inner product, and characteristic functionals
\begin{equation}\label{charfuncont}
S_{A^{\Delta}}(f)=e^{-(1/2)\langle A^{\Delta} f, f\rangle_{L^2(M)}}   ,\quad  f \in L^2(M),
\end{equation}
\begin{equation}\label{charfundiscr}
 S_{A^{\Delta}_{i}}(c)=e^{-(1/2)\langle A^{\Delta}_{i}c, c\rangle_{W,i}}  ,\quad  c\in C^0(K_{i}).
\end{equation}
\begin{theorem}\label{theoremonman}
The triple $( C^0(K_{i}), I_{ij}^{W}, S_{A^{\Delta}_{i}})$ is an inductive system of characteristic functionals, the inductive limit functional exists and may be identified with $S_{A^{\Delta}}$ via the isomorphism between $C^0_{\infty}(M)$ and $L^2(M)$.
\end{theorem}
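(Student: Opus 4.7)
The plan has three steps, corresponding to the three assertions: verifying the consistency condition for the characteristic functionals, verifying that the hypotheses of Proposition \ref{prooncharfu}(a) hold so that the limit functional exists, and finally checking that under the isometric isomorphism $J: C^0_{\infty}(M) \to L^2(M)$ this limit is indeed $S_{A^{\Delta}}$.

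First, I would establish the consistency condition $S_{A_{j}^{\Delta}} I_{ij}^{W}=S_{A_{i}^{\Delta}}$ for all $j>i$. By Example \ref{gaumesexam}, this reduces to verifying the covariance identity $(I_{ij}^{W})^{*} A_{j}^{\Delta} I_{ij}^{W}=A_{i}^{\Delta}$ (with adjoints taken with respect to the Whitney inner products). Substituting the definition (\ref{aioperdefj}) gives $(I_{ij}^{W})^{*} W_{j}^{*} A^{\Delta} W_{j} I_{ij}^{W}$. The relation $W_{j} I_{ij}^{W}=W_{i}$ from Proposition \ref{indusyswhi}(b) yields $(W_{j}I_{ij}^{W})^{*}=W_{i}^{*}$, so the composition collapses to $W_{i}^{*} A^{\Delta} W_{i}=A_{i}^{\Delta}$, which is exactly (\ref{consgaucova}) in the present notation.

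Next, to invoke Proposition \ref{prooncharfu}(a) I need the uniform bound on the covariances required in Example \ref{gaumesexam}. Since $A^{\Delta}=(1+\Delta)^{-1}$ is positive with operator norm at most $1$ on $L^{2}(M)$, and since $W_{i}$ is an isometry by Proposition \ref{indusyswhi}(b) (so that $W_{i}^{*}$ is a contraction), we obtain $\|A_{i}^{\Delta}\|_{W,i}\leq \|A^{\Delta}\|_{L^{2}(M)}\leq 1$ uniformly in $i$. The equicontinuity computation performed in Example \ref{gaumesexam} then shows that the sequence $S_{A_{i}^{\Delta}}P_{i}$ is equicontinuous and pointwise uniformly bounded on $C^{0}_{\infty}(M)$, so Proposition \ref{prooncharfu}(a) provides a unique continuous positive definite functional $S_{\infty}$ on $C^{0}_{\infty}(M)$ with $S_{\infty}I_{i}^{W}=S_{A_{i}^{\Delta}}$.

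Finally, I would identify $S_{\infty}$ with $S_{A^{\Delta}}$. The coherence property of Proposition \ref{indusyswhi}(b) together with Proposition \ref{whitmapprope}(c) gives a canonical isometric isomorphism $J: C^{0}_{\infty}(M)\to L^{2}(M)$ characterized by $J\circ I_{i}^{W}=W_{i}$ for every $i$. For $c\in C^{0}(K_{i})$ a direct computation gives
\begin{equation*}
S_{A^{\Delta}}(J(I_{i}^{W}c))=S_{A^{\Delta}}(W_{i}c)=e^{-(1/2)\langle A^{\Delta}W_{i}c,W_{i}c\rangle_{L^{2}(M)}}=e^{-(1/2)\langle W_{i}^{*}A^{\Delta}W_{i}c,c\rangle_{W,i}}=S_{A_{i}^{\Delta}}(c).
\end{equation*}
Thus $S_{A^{\Delta}}\circ J$ and $S_{\infty}$ agree on each $\mathtt{Im}\,I_{i}^{W}$; since $\cup_{i}\mathtt{Im}\,I_{i}^{W}$ is dense in $C^{0}_{\infty}(M)$ and both functionals are continuous, they coincide. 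The step I expect to require the most care is confirming that the isomorphism $J$ supplied by the construction of the inductive limit in $\mathbf{Hilb}_{1}$ truly satisfies $J\circ I_{i}^{W}=W_{i}$ (i.e., that $L^{2}(M)$ with the maps $W_{i}$ realizes the universal coherent system), but this is essentially built into Proposition \ref{indusyswhi}(b) combined with the density statement of Proposition \ref{whitmapprope}(c).
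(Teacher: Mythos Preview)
Your proof is correct and follows essentially the same approach as the paper: verifying the covariance consistency $(I_{ij}^{W})^{*}A_{j}^{\Delta}I_{ij}^{W}=A_{i}^{\Delta}$ via Proposition~\ref{indusyswhi}(b), invoking the uniform bound on the $A_{i}^{\Delta}$ together with Example~\ref{gaumesexam} and Proposition~\ref{prooncharfu}(a) to obtain the limit functional, and then identifying it with $S_{A^{\Delta}}$ by uniqueness. Your treatment of the identification step is in fact more explicit than the paper's, which simply appeals to uniqueness without writing out the computation $S_{A^{\Delta}}\circ J\circ I_{i}^{W}=S_{A_{i}^{\Delta}}$.
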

\begin{proof}
Using Proposition \ref{indusyswhi}(b) we find
\begin{multline*}
(I_{ij}^{W})^{*}A_j^{\Delta} I_{ij}^{W} = (I_{ij}^{W})^{*}W_j^{*}A^{\Delta}W_j I_{ij}^{W}=
(W_jI_{ij}^{W})^{*}A^{\Delta}W_j I_{ij}^{W}  = W_iA^{\Delta}W_i A_i^{\Delta},
\end{multline*}
hence (\ref{consgaucova}) holds and we conclude that $( C^0(K_{i}), I_{ij}^{W}, S_{A^{\Delta}_{i}})$ is an  inductive system of characteristic functionals. Further, since the operators $A_i^{\Delta}$ are uniformly bounded, we see, using  Example (\ref{gaumesexam}) and Proposition \ref{prooncharfu}(a)  that a limit functional exists. The uniqueness of the latter limit allows us to identify it with $S_{A^{\Delta}}$. 
\end{proof}

\subsection{The infinite volume limit}

We now consider a sequence of triangulations ${K_i}$ of $\mathbb{R}^d$ that satisfy the assumptions stated in the beginning of Section \ref{derhamwhisec}. For each $i$ we choose a finite $d$-dimensional subcomplex $K_i^0$ of $K_i$ such that $K_i^0\subset K_{i+1}^0$ and 
$\cup_{i=1}^{\infty}K_i^0=\mathbb{R}^d$. For example, we can take $K_i^0$ to form a nested sequence of cubes.

We write $C^0(K_i^0)$ for the real valued $0$-cochains on $K_i^0$ that vanish on the boundary $\partial K_i^0$ of $K_i^0$ and $C_c(\mathbb{R}^d)$ for the real compactly supported continuous functions on $\mathbb{R}^d$. For every $i$ we define the de Rham map $R_i: C_c(\mathbb{R}^d) \rightarrow C^0(K_{i})$ to be the evaluation map at all points in $K_i^0 \setminus \partial K_i^0$ and 0 on $\partial K_i^0$. We note that the Whitney map $W_i$ given by (\ref{whimapdefi}) is well-defined as a map from $C^0(K_{i})$ to $C_c(\mathbb{R}^d)$. 

Thus we can define Whitney inner product on $C^0(K_i^0)$ as in (\ref{whitinnerp}) and bonding maps $I_{ij}^{W,0}$ given by (\ref{imapdefi}).
One checks exactly as in proof of Proposition \ref{indusyswhi} that the pair $( C^0(K_{i}^0), I_{ij}^{W,0})$, where $C^0(K_{i}^0)$ is equipped with the Whitney inner product, is an inductive system in $\mathbf{Hilb}_1$  and the maps $I_{ij}^{W,0}$ are isometries. 

We define a bounded positive operator on $L^2(\mathbb{R}^d)$ via
$$ A^d=(1+\Delta_d)^{-1},$$
where $\Delta_d$ is the positive Laplacian on $\mathbb{R}^d$. Further, we  consider operators $A^d_i$ on $C^0(K_{i}^0)$ given by (\ref{aioperdefj}) and characteristic functionals $S_{ A^d}$ and $S_{A^d_i}$ given by (\ref{charfuncont}) and (\ref{charfundiscr}), respectively.

We note that the approximation property in Proposition \ref{whitmapprope}(b) continues to hold for compactly supported smooth functions on $\mathbb{R}^d$, hence $\cup_{i}\mathtt{Im}W_i$ is dense in $L^2(\mathbb{R}^d)$ and we conclude that the inductive limit space of the inductive system $( C^0(K_{i}^0), I_{ij}^{W,0})$ in the category of Hilbert spaces and contractions may be identified with $L^2(\mathbb{R}^d)$. Thus Theorem \ref{theoremonman} may be reformulated as follows.

\begin{theorem}
	The triple $( C^0(K_{i}), I_{ij}^{W,0}, S_{A^d_i})$ is an inductive system of characteristic functionals, the inductive limit functional exists and may be identified with $S_{A^{d}}$.
\end{theorem}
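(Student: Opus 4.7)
The plan is to follow the proof of Theorem \ref{theoremonman} essentially verbatim, since every ingredient has a direct counterpart in the $\mathbb{R}^d$ setting once the discussion preceding the statement is in place. The three tasks are: verify the consistency identity $(I_{ij}^{W,0})^* A_j^d I_{ij}^{W,0} = A_i^d$; invoke Example \ref{gaumesexam} and Proposition \ref{prooncharfu}(a) to conclude that an inductive limit characteristic functional exists; and identify the limit with $S_{A^d}$ via the natural isometry between the inductive limit of the system and $L^2(\mathbb{R}^d)$.

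First I would record the coherence identity $W_j I_{ij}^{W,0} = W_i$. This is justified by repeating the argument of Proposition \ref{indusyswhi}(b): for $c \in C^0(K_i^0)$, both $W_i c$ and $W_j R_j W_i c$ are continuous, compactly supported, piecewise affine with respect to $K_j$, and agree on the vertices of $K_j$ (they agree on vertices in $K_j^0\setminus \partial K_j^0$ by definition of $R_j$, and both vanish on $\partial K_j^0$ and outside $K_j^0$ because $W_i c$ is a linear combination of barycentric coordinate functions of interior vertices of $K_i^0$, each of which vanishes off the star of its vertex). The consistency of the covariances is then the same three-line computation as in Theorem \ref{theoremonman}:
\begin{equation*}
(I_{ij}^{W,0})^* A_j^d I_{ij}^{W,0} = (W_j I_{ij}^{W,0})^* A^d (W_j I_{ij}^{W,0}) = W_i^* A^d W_i = A_i^d,
\end{equation*}
which by Example \ref{gaumesexam} is exactly the statement that the triple is an inductive system of characteristic functionals.

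For the existence of the inductive limit functional, note that since $W_i$ is an isometry for the Whitney inner product and $\|A^d\|\le 1$ on $L^2(\mathbb{R}^d)$, one has $\|A_i^d\|_{W,i}\le 1$ uniformly; hence Example \ref{gaumesexam} provides equicontinuity, and Proposition \ref{prooncharfu}(a) supplies a unique continuous positive definite limit functional on the inductive limit Hilbert space, which the preceding paragraph identifies with $L^2(\mathbb{R}^d)$ via the maps $W_i$. Under this identification, the limit functional and $S_{A^d}$ both restrict to $S_{A_i^d}\circ W_i^{-1}$ on each $\mathtt{Im}\, W_i$ by the definition (\ref{aioperdefj}); since $\cup_i \mathtt{Im}\, W_i$ is dense in $L^2(\mathbb{R}^d)$ and both functionals are continuous, they coincide.

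There is no substantial obstacle beyond the reformulation itself; the only delicate bookkeeping is the boundary-vanishing condition imposed on $C^0(K_i^0)$, which must be shown to be compatible with the Whitney and de Rham maps in the sense above. The density of $\cup_i \mathtt{Im}\, W_i$ in $L^2(\mathbb{R}^d)$, which underlies the identification of the limit space, is handled by the remark preceding the theorem that the approximation property of Proposition \ref{whitmapprope}(b) extends to $C_c^\infty(\mathbb{R}^d)$, together with the density of $C_c^\infty(\mathbb{R}^d)$ in $L^2(\mathbb{R}^d)$.
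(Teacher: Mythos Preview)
Your proposal is correct and follows exactly the approach the paper intends: the paper does not give a separate proof for this theorem but simply states that it is a reformulation of Theorem \ref{theoremonman}, and you have spelled out precisely the steps (coherence $W_jI_{ij}^{W,0}=W_i$, the three-line covariance consistency computation, uniform boundedness yielding equicontinuity via Example \ref{gaumesexam}, and density of $\cup_i\mathtt{Im}\,W_i$ for the identification) that this reformulation entails. If anything, your write-up is more explicit than the paper's, particularly in handling the boundary-vanishing compatibility, which the paper leaves implicit.
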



\begin{thebibliography}{9999}
\bibitem{AZ} Albeverio, S., Zegarliński, B.: Construction of convergent simplicial approximations of quantum fields on Riemannian manifolds, {\em Commun. Math. Phys.} {\bf 132} (1990), 39--71.
\bibitem{B} Bogachev, V.I.: {\em Measure Theory, vol. 1 and 2}, Springer, 2007.
\bibitem{C} Castillo, J. M. F.: The hitchhiker guide to categorical Banach space theory. Part I, {\em Extracta Mathematicae},
{\bf 25} (2010), 103--149.
\bibitem{CY} Chung, F., Yau, S.-T.:  Discrete Green’s functions, {\em J. Combin. Theory Ser. A} {\bf 91} (2000), 191--214.
\bibitem{Do} Dodziuk, J.: Finite difference approach to the Hodge theory of harmonic forms, {\em Amer. J. Math.} {\bf 98} (1976), 79--104.
\bibitem{Do2} Dodziuk, J.: A characterization of Whitney forms, {\em Proc. Amer. Math. Soc.} Series B, {\bf 10} (2023), 455--460.
\bibitem{DP} Dodziuk, J. and Patodi, V.K.: Riemannian structures and triangulations of manifolds, {\em J. Indian Math. Soc.}
{\bf 40} (1976), 1--52.
\bibitem{GJ} Glimm, J., Jaffe, A.: {\em Quantum Physics: A Functional Integral Point of View}, second ed., Springer, Berlin--New York, 1987.
\bibitem{Gr} Greene, W. A.: Projective limits in harmonic analysis, {\em Trans. Amer. Math. Soc.}, {\bf 209} (1975), 119--142.
\bibitem{KK} Klimek, S., Kondracki, W.: Construction of two-dimensional quantum chromodynamics, {\em Commun. Math. Phys.} {\bf 113} (1987), 389--402.
\bibitem{LSW} Van Luijk, L., Stottmeister, A., Werner R. F.: Convergence of dynamics on inductive systems of Banach spaces, {\em Ann. Henri Poincar\'e} {\bf 25} (2024), 4931--4986.
\bibitem{MMST} Morinelli, V., Morsella G., Stottmeister, A., Tanimoto, Y.: Scaling limits of lattice quantum fields by wavelets, {\em Commun. Math. Phys.} {\bf 387} (2021), 299--360.
\bibitem{Mu} Munkres, J. R.: {\em Elements of Algebraic Topology}, Addison-Wesley, Reading, MA, 1984.
\bibitem{Sz} Schwartz, L.: {\em Radon Measures on Arbitrary Topological Spaces and Cylindrical Measures}, Oxford Univ. Press, Bombay, 1973.
\bibitem{VM} Vilela Mendes, R.: A consistent measure for lattice Yang-Mills, {\em Int. J. Modern Physics A} {\bf 32} (2017), 1750016.

\end{thebibliography}
\end{document}